\DeclareMathOperator{\e}{\mathrm{e}} 
\newcommand{\norm}[1]{\left\lVert #1 \right\rVert_2}
\renewcommand*{\d}%
{\@ifnextchar^{\DIfF}{\DIfF^{}}}
\def\DIfF^#1{%
	\mathop{\mathrm{\mathstrut d}}%
	\nolimits^{#1}\gobblespace
}
\def\gobblespace{%
	\futurelet\diffarg\opspace}
\def\opspace{%
	\let\DiffSpace\!%
	\ifx\diffarg(%
	\let\DiffSpace\relax
	\else
	\ifx\diffarg\[%
	\let\DiffSpace\relax
	\else
	\ifx\diffarg\{%
	\let\DiffSpace\relax
	\fi\fi\fi\DiffSpace
}
\newcommand{\R}{\mathbb{R}}
\renewcommand{\S}{\mathcal{S}}    
\newcommand\ie{\emph{i.e.}}
\newtheorem{example}{Example}
\newtheorem{assumption}{Assumption}
\newtheorem{lemma}{Lemma}
\newtheorem{corollary}{Corollary}
\newtheorem{proposition}{Proposition}
\def\be{\begin{equation}}
\def\ee{\end{equation}}
\def\ba{\begin{array}}
\def\ea{\end{array}}
\def\eqa{\begin{eqnarray}}
\def\eqe{\end{eqnarray}}
\title{\LARGE \bf Analysis of free recall dynamics of an abstract working memory model$^{\star}$
\author{Gianluca Villani, Matin Jafarian$^{\ast}$, Anders Lansner, Karl Henrik Johansson}
\thanks{$^{\star}$This work was supported by the Knut and Alice Wallenberg Foundation, the Swedish Strategic Research Foundation and the Swedish Research Council. The authors are with the School of Electrical Engineering and Computer Science, KTH Royal Institute of Technology, Stockholm, Sweden. 
$^{\ast}$Corresponding to matinj@kth.se.}}
\begin{document}
\maketitle
\thispagestyle{empty}
\pagestyle{empty}
\begin{abstract}
This paper analyzes the {\em free recall} dynamics of a working memory model. Free recalling is the reactivation of a stored pattern in the memory in the absence of the pattern. Our free recall model is based on an abstract model of a modular neural network composed on $N$ modules, {\em hypercolumns}, each of which is a bundle of {\em minicolumns}. This paper considers a network of $N$ modules, each consisting of two minicolumns, over a complete graph topology. We analyze the free recall dynamics assuming a constant, and homogeneous coupling between the network modules. We obtain a sufficient condition for synchronization of network's minicolumns whose activities are positively correlated. Furthermore, for the synchronized network, the bifurcation analysis of one module is presented. This analysis gives a necessary condition for having a stable limit cycle as the attractor of each module. The latter implies recalling a stored pattern. Numerical results are provided to verify the theoretical analysis. 
\end{abstract}
\section{Introduction}\label{sec:intro}
Working Memory (WM) is a general-purpose cognitive system responsible for temporary holding information in service of higher-order cognition such as reasoning and decision making. The importance of understanding human memory functioning is evident from its central role in our cognitive functions \cite{d2015cognitive} as well as its role as the main inspiration behind developments in artificial memory networks \cite{graves2016hybrid}. 

Among the most important features of working memory, is the spontaneous free recall process. The latter refers to reactivation of a stored pattern in the memory in the absence of the pattern. 
 The precise mechanisms underlying free recall dynamics in the human brain is not yet fully understood \cite{lansner2013reactivation}. Yet, several abstract neuro-computational \cite{fiebig2018active} as well as more detailed spiking neural network models have been built based on different neurobiological hypotheses \cite{lansner2009associative}, at different levels of abstractions, to account for human experimental data on working memory.

Attractor neural networks, dynamical networks which evolves towards a stable pattern, have been employed for understanding the mechanisms of human memory, including working memory. Among the most studied models is the Hopfiled model \cite{hopfield1982neural} (and its several variations e.g. \cite{gray2017agent}) which represents a memory network with a fully connected graph, symmetric weights and capable of storing only binary values. The model has limitations on the capacity as well as recalling previously stored patterns \cite{fiebig2018active}. The latter has motivated designs with more than two states including the modular recurrent networks of Potts type \cite{kanter1988potts}.

The model in this paper is originated from the biologically inspired modular model of Potts type in \cite{lansner2013reactivation}. The network model in \cite{lansner2013reactivation} is composed of $N$ modules, hypercolumns, each of which consists of a bundle of elementary units, minicolumns, interacting via lateral inhibition, such that each hypercolumn module acts as a
winner-take-all microcircuit. We refer the interested reader for a detailed biological rationale of this model to \cite{lansner2013reactivation}. 

In this paper, we study the free recall dynamics of WM based on a simplification of the model in \cite{lansner2013reactivation}. We consider a network of $N$ hypercolumns, each consisting of two minicolumns. We assume that a few patterns have been encoded in this WM by means of fast Hebbian plasticity \cite{fiebig2017spiking} using exogenous signals. In this paper, we study the post-training reactivation dynamics of such a multi-item WM assuming a constant and homogeneous coupling between the network modules. Using tools from stability theory (a Lyapunov-based argument, e.g. \cite{jafarian2018syn}), we obtain a sufficient condition for achieving synchronization of network units which are positively correlated assuming a complete graph topology. Furthermore, for the synchronized network, the bifurcation analysis of one module's dynamics is presented. This analysis gives a necessary condition for having a stable limit cycle as the attractor of each network module. The latter implies recalling a stored pattern.

To the best of our knowledge, the free recall dynamics of working memory has not been studied analytically before, in particular from a control theory perspective. As shown in this paper, such analysis is useful for a better understanding of the free recalling mechanism of working memory networks.
 
The paper is organized as follows. Section \ref{sec:problem}, presents the model and problem formulation. Section \ref{sec:analysis}, presents a Lyapunov analysis for characterizing synchronization condition. This section also provides the bifurcation analysis of one module of the network assuming that the network is synchronized. In Section \ref{sec:sim}, simulation results are presented. Section \ref{sec:conclude} concludes the paper.
\section{Problem formulation}\label{sec:problem}
In this section, we present our model and state our goal of analysis. We model the free-recall dynamics of WM based on a simplification of a (non-spiking) attractor neural network originated from a modular
recurrent neural network model \cite{lansner2013reactivation}. 
                                                                                                                                                                                                                                                               This network is composed of $N$ modules, hypercolumns, each of which consists of $m$ elementary units, $m$ minicolumns. 
In this model, each memory input (or pattern) is encoded by a mechanism into $N$ attributes (represented by hypercolumns) each of which is composed of $m$ intervals (represented by minicolumns).\\
Figure \ref{f1} shows a configuration of a network composed of $3$ hypercolumns. 
\begin{figure}
\centering
\includegraphics[scale=0.22]{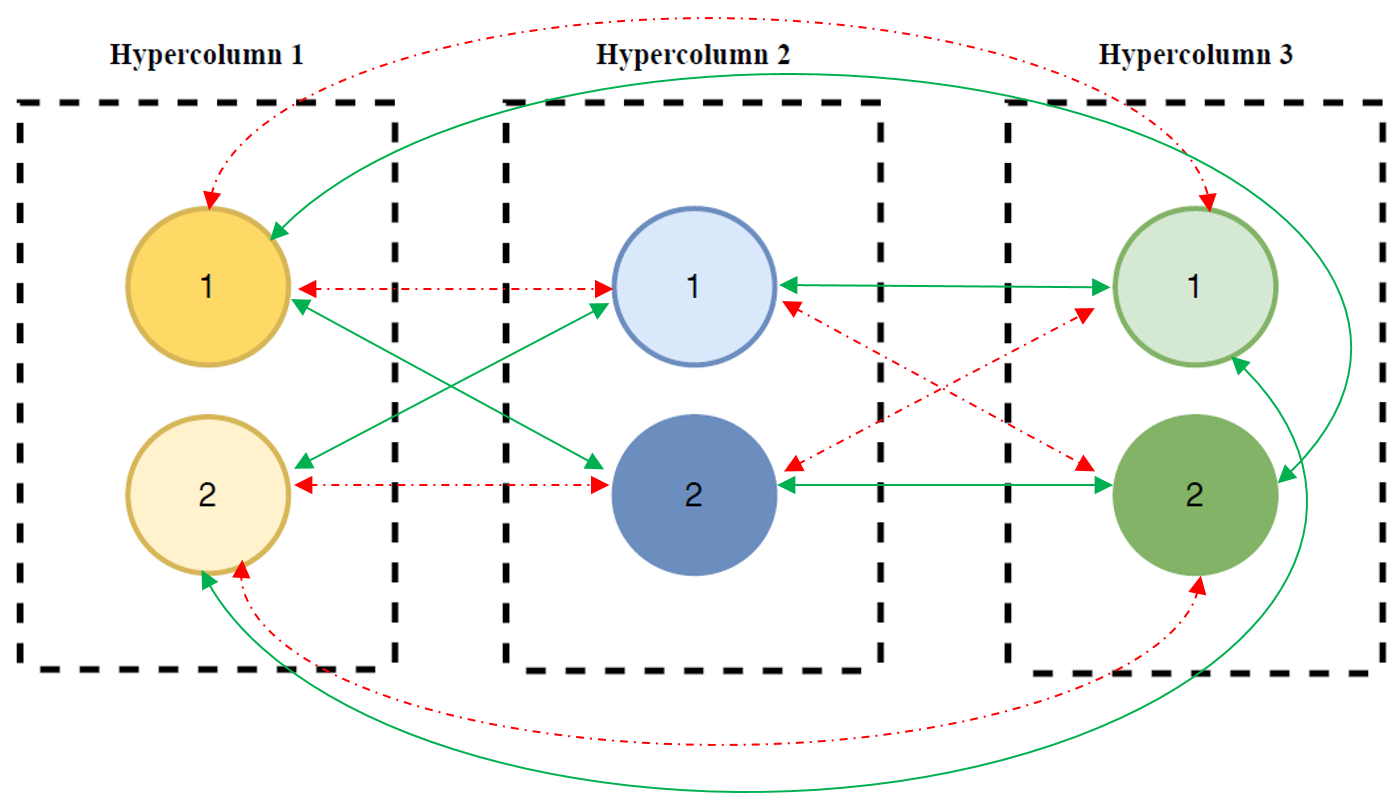}
\caption{\label{f1}A network composed of $3$ hypercolumns. Each hypercolumn is composed of 2 minicolumns. Each minicolumn is connected via positive (excitatory, shown by solid lines) coupling to the minimcolumns in other hypercolumns if their activities are correlated. Otherwise the coupling is negative (inhibitory, shown by dashed lines).}
\end{figure}
We assume that the memory has been trained to learn some patterns using exogenous signals. In this paper, we study the post-training dynamics. Our goal is analyzing the network's dynamics in order to obtain conditions under which a stored pattern is recalled.\\[1.5mm] 
We represent the network by a connected, undirected complete graph composed of $N$ nodes. Each node corresponds to one hypercolumn $i \in \{1,\ldots,N\}$ whose state is denoted by $s_i$. We study the case where $s_i \in \R^2$, i.e., $s_i=[s_{i1},s_{i2}]$, where $s_{ij}$ is the state of minicolumn $j$, $j \in \{1,2\}$, of hypercolumn $i$.\\[-0.4mm]

We model the dynamics of each minicolumn $s_{ij}$ based on a simplification of the model in \cite{lansner2013reactivation}, as follows  
\begin{equation}\label{s_dynaics_recall-1}
\dot s_{ij} = \sum\limits_{\substack{k=1 \\ k\neq i}}^N\sum\limits_{l=1}^2 {\omega}_{kl,ij} o_{kl} -a_{ij}-s_{ij}, 
\end{equation}
\begin{equation}\label{a_dynaics_recall}
\tau \dot a_{ij} = {g_a} o_{ij}-a_{ij}, \quad \tau >1,
\end{equation}
\begin{equation}\label{softmax_definition}
o_{ij} = \sigma(s_{ij})= \frac{\mathrm{e}^{s_{ij}}}{\sum\limits_{k=1}^m \mathrm{e}^{s_{ik}}},
\end{equation}
where $s_{ij} \in \R$, $a_{ij} \in \R$, and $o_{ij} \in (0,1)$ represent the level of activation, the level of the adaptation, and the output of the minicolumn $m_{ij}$ (minicolumn $j$ in the hypercolumn $i$), receptively. The parameters $g_a >0$ and $\tau >1$ are constant. 
The coupling wight of the interconnection of the two minicolumns $s_{kl}$ (in hypercolumn $k$) and $s_{ij}$ (in hypercolumn $i$) is represented by ${\omega}_{kl,ij}$. We assume that the coupling weights are constant in the post-training dynamics. Here, we apply Hebb's rule in the way that minicolumns are interconnected. This implies that the mini-columns which were activated simultaneously in the training phase are connected with positive couplings in the post-taining dynamics, while the minicolumns with uncorrelated activities are connected with negative couplings (see Fig \ref{f1}).

The role of variable $a_{ij}$, which models the biological mechanism of neural adaptation, is to deactivate its corresponding minicolumn $m_{ij}$ in response to the changes in the activity of the other minicolumn in hypercolumn $i$.  Owing to the definition of $o_{ij}$ in \eqref{softmax_definition}, units within the same hypercolumn interact via {\em lateral inhibition}, such that each hypercolumn acts as a soft winner-take-all microcircuit. This means that what determines the output of each minicolumn is the relative level of its activation with respect to the activation of the other minicolumn in the same hypercolumn. This interaction is modeled by a {\em soft-max} function, $\sigma$ as defined in Equation \eqref{softmax_definition}.

\begin{assumption}\label{ass2}
All interconnection weights are equal such that $|{\omega}_{kl,ij}|=\frac{\omega}{2}, \omega>0$.
\end{assumption}
Without loss of generality, we assume that any two minicolumns $i1$ and $k1$ ($i2$ and $k2$) are connected by positive coupling, whereas the interconnection of $i1$ to $k2$ is negative, $\forall i, k \in \{1,\ldots,N\}$. Since each hypercolumn is composed of two minicolumns, we can write $o_{i1}= 1- o_{i2}, \forall i$ based on the definition of the $\sigma$ function in \eqref{softmax_definition}.\\[0.5mm]
Now, with Assumption \ref{ass2}, the dynamics of each minicolumn in \eqref{s_dynaics_recall-1} can be written as follows
\begin{align}\label{s_dynaics_recall}
\dot s_{ij} &= \omega \sum\limits_{\substack{k=1 \\ k\neq i}}^N \sigma(s_{kj}) - a_{ij}-s_{ij}- (N-1) \frac{\omega}{2},
\end{align}
where $i \in \{1, \ldots, N\}, j \in \{1,2\}$. The hypercolumn dynamics (the node dynamics) obeys
\begin{equation}\label{s_dynaics_recall-n}
\dot s_{i} =  -s_{i}-a_{i} -\frac{\kappa}{2} + \omega \sum\limits_{\substack{k=1 \\ k\neq i}}^N  \sigma(s_k),\quad \omega>0,
\end{equation}
\begin{equation}\label{a_dynaics_recall-n}
\tau \dot a_{i} = {g_a} \sigma(s_i)-a_{i}, \quad \tau >1,
\end{equation}
where $s_i = [s_{i1},s_{i2}]^T$, $a_i = [a_{i1}, a_{i2}]^T$ , $\sigma(s_i)=[\sigma(s_{i1}),\sigma(s_{i2})]^T$, and $\kappa=(N-1) \omega$.\\[2mm]
In the next section, we analyze the network dynamics (the free-recall dynamic) with the node dynamics as in \eqref{s_dynaics_recall-n}, \eqref{a_dynaics_recall-n} and to answer the question that under which conditions a stored pattern is recalled. 
From a control theory perspective, this question is translated to characterizing coupling conditions under which the network modules synchronize \cite{sepulchre2006oscillators} and oscillate according to the desired pattern.
\section{Analysis}\label{sec:analysis}
In this section, we first derive the coupling condition under which synchronization, \ie\ $s_{i,1} = s_{k,1}; s_{i,2} = s_{k,2}; a_{i,1}=a_{k,1}; a_{i,2} = a_{k,2}$, can be achieved for the network with the dynamics in \eqref{s_dynaics_recall-n}, \eqref{a_dynaics_recall-n}. We then analyze the behavior of one module (one hypercolumn) in the synchronized network using tools from the bifurcation theory. The analysis gives a coupling condition for having a stable limit cycle behaviour for each hypercolumn in the synchronized network.
\begin{proposition}\label{pr1}
Consider the network dynamical system in \eqref{s_dynaics_recall-n}- \eqref{a_dynaics_recall-n}. If $\frac{g_a}{\tau} < \omega < \frac{g_a}{\tau-1}$ holds, the solution to the system in \eqref{s_dynaics_recall-n}- \eqref{a_dynaics_recall-n} converges to the set ${\S}=\{s_i \in \R, a_i \in \R: s_i = s_k; a_i=a_k, \forall i, k \in \{1,\ldots,N\}\}$.
\end{proposition}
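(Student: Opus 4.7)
My plan is to prove synchronization by analyzing the pairwise errors $\tilde s_{ik}:=s_i-s_k$ and $\tilde a_{ik}:=a_i-a_k$, constructing a Lyapunov function on them, and invoking LaSalle's invariance principle. Subtracting the hypercolumn dynamics across any pair $(i,k)$ collapses the mean-field sum and cancels the offset $\kappa/2$, giving $\dot{\tilde s}_{ik}=-\tilde s_{ik}-\tilde a_{ik}-\omega\tilde\sigma_{ik}$ and $\tau\dot{\tilde a}_{ik}=g_a\tilde\sigma_{ik}-\tilde a_{ik}$ with $\tilde\sigma_{ik}:=\sigma(s_i)-\sigma(s_k)$. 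I would then decompose each hypercolumn's two-minicolumn state into sum coordinates $w_i:=s_{i,1}+s_{i,2}$, $c_i:=a_{i,1}+a_{i,2}$ and difference coordinates $u_i:=s_{i,1}-s_{i,2}$, $b_i:=a_{i,1}-a_{i,2}$. Because $\sigma_1+\sigma_2\equiv 1$, the pairwise sum-errors obey $\dot{\tilde w}_{ik}=-\tilde w_{ik}-\tilde c_{ik}$, $\tau\dot{\tilde c}_{ik}=-\tilde c_{ik}$, a linear decoupled system that converges with no condition on $\omega$. All nonlinearity sits in the scalar pair-difference system $\dot{\tilde u}_{ik}=-\tilde u_{ik}-\tilde b_{ik}-\omega\Delta_{ik}$, $\tau\dot{\tilde b}_{ik}=g_a\Delta_{ik}-\tilde b_{ik}$, where $\Delta_{ik}:=\phi(u_i)-\phi(u_k)$ with $\phi(u):=\tanh(u/2)$ strictly increasing, hence $\tilde u_{ik}\Delta_{ik}\ge 0$.

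The key move is the linear change of variable $\tilde z_{ik}:=\tilde u_{ik}+\alpha\tilde b_{ik}$ with $\alpha:=\omega\tau/g_a$. Direct computation removes $\Delta_{ik}$ from the $\tilde z_{ik}$ equation, leaving the linear relation $\dot{\tilde z}_{ik}=-\tilde z_{ik}-\eta\tilde b_{ik}$ with $\eta:=(g_a-\omega(\tau-1))/g_a$; the upper bound $\omega<g_a/(\tau-1)$ is precisely $\eta>0$. I then take the Lyapunov candidate $V_{ik}:=\tilde z_{ik}^2+\delta\,\tilde u_{ik}^2$ with $\delta>0$ to be tuned, differentiate, and drop the nonpositive term $-2\delta\omega\tilde u_{ik}\Delta_{ik}$ using the monotonicity of $\phi$. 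What remains is a quadratic form in $(\tilde u_{ik},\tilde b_{ik})$ whose negative semidefiniteness is controlled by the Schur-complement inequality $4\delta\alpha(\alpha+\eta-1)\ge(\eta+\delta)^2$.

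The algebraic identity $\alpha+\eta-1=\omega/g_a$ turns this into $(\delta+\eta)^2\le 4\delta\omega^2\tau/g_a^2$, a quadratic in $\delta$ whose discriminant condition reduces to $\eta\le\omega^2\tau/g_a^2$ and, after clearing denominators, to the lower bound $\omega\ge g_a/\tau$. Hence, under $g_a/\tau<\omega<g_a/(\tau-1)$ one can pick $\delta>0$ so that $\dot V_{ik}<0$ off $\tilde u_{ik}=\tilde b_{ik}=0$. Boundedness of trajectories is immediate since $\sigma\in(0,1)^2$ renders the coupling term bounded and the linear self-dynamics dissipative. Summing $V:=\sum_{i<k}V_{ik}$ and applying LaSalle's invariance principle on the positively invariant compact set, the largest invariant subset of $\{\dot V=0\}$ is $\{\tilde u_{ik}=\tilde b_{ik}=0\ \forall i,k\}$. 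Together with the automatic convergence of $\tilde w_{ik},\tilde c_{ik}$ this gives $s_i=s_k$ and $a_i=a_k$ for all $i,k$, i.e.\ convergence to $\S$.

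The hard part will be the algebraic identification step: spotting that the change of variable with coefficient $\omega\tau/g_a$ annihilates the nonlinearity in $\dot{\tilde z}_{ik}$, and then recognizing the simplification $\alpha+\eta-1=\omega/g_a$ that collapses the Schur-complement inequality into a single discriminant condition whose boundary is exactly $\omega=g_a/\tau$. A secondary subtlety is verifying that the smaller root of the $\delta$-quadratic is strictly positive---this relies once more on $\eta>0$, consistent with the same upper bound---so that $V_{ik}$ is a bona fide positive definite Lyapunov function.
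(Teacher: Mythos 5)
Your proposal is correct and follows essentially the same route as the paper: pairwise error dynamics over the complete graph, the nonlinearity-cancelling linear combination $\tilde z_{ik}=\tilde u_{ik}+\frac{\omega\tau}{g_a}\tilde b_{ik}$ (which is the paper's $\overline g D_k+\omega E_k$ up to scaling), a Lyapunov function of the form $\tilde z^2+\delta\,\tilde u^2$ with the monotone coupling term dropped, and positive definiteness of the residual quadratic form yielding exactly the window $g_a/\tau<\omega<g_a/(\tau-1)$. Your within-hypercolumn sum/difference reduction to a scalar $\tanh$ nonlinearity and your free parameter $\delta$ (optimized via the discriminant, rather than the paper's fixed choice $\beta\omega$) are minor refinements that arguably make the argument cleaner, but the underlying mechanism is identical.
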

\begin{proof}
Define $\alpha = \frac{1}{\tau},\ \alpha <1 $. Define $D_k= s_1-s_k$ and $E_k=a_1-a_k$. By multiplying the equation in \eqref{a_dynaics_recall-n} with $\alpha$, the error dynamics for any two hypercolumns can be rewritten as:
$$
\dot D_{k} = - D_k - E_k - \omega (\sigma(s_1) - \sigma(s_k)) 
$$
$$
\dot{E}_{k} =  \overline g (\sigma(s_1) - \sigma(s_k)) - \alpha E_{k}, \quad \overline g = \alpha g_a. 
$$
Notice that the softmax function $\sigma(s_i)$ of the vector $s_i$ as defined in \eqref{softmax_definition} is a monotone function, \ie 
	$$
	(a-b)^\top (\sigma(a) - \sigma(b)) \geq 0, \forall a, b \in \mathbb{R}^n. 
	$$
	\iftrue
	Define the following Lyapunov function and denote by $\norm{\cdot}$ the Euclidean norm:
	\begin{equation}\label{lyap_def}
	\begin{aligned}
	V = \sum\limits_{k=2}^N V_k \nonumber, \quad V_k &= \frac{1}{2} \norm{\overline g D_k + \omega E_k }^2 + \frac{1}{2}\beta\omega \norm{D_k}^2, 
	\end{aligned}
	\end{equation}
	where $\beta = \overline g + (\alpha - 1)\omega$.
	By assuming $\omega< \frac{\alpha}{1-\alpha}g_a$, we guarantee that $\beta >0$, thus the function in \eqref{lyap_def} is positive definite. We are interested in the conditions on $\omega$ such that the error dynamics is asymptotically stable, i.e. the conditions on $\omega$ such that $\dot V<0$. We have,
	\begin{gather}
		\dot V =  - \sum\limits_{k=2}^N  (\norm{\overline gD_k + \omega E_k}^2 + \nonumber\\
		+ \beta \overline g D_k^\top E_k + \beta \omega \norm{E_k}^2 + \omega \beta \norm{D_k}^2 + \omega D_k^\top E_k + \nonumber\\
		+ \omega ^2 \beta D_k^\top (\sigma(s_1) - \sigma(s_k))).
	\end{gather}
	\label{eq:Lyap_complete_derivative}
	Rearranging the terms, we obtain	
	\begin{gather}
		\dot V =  - \sum\limits_{k=2}^N( \norm{\overline gD_k + \omega E_k}^2 + \nonumber \\
		+ \omega \beta \left (\norm{E_k}^2 + \norm{D_k}^2 + \left(1 + \frac{\overline g}{\omega}\right) D_k^\top E_k\right )  + \nonumber  \\
		+ \omega ^2 \beta D_k^\top (\sigma(s_1) - \sigma(s_k))).
		\label{eq:Lyap_complete_derivative_rearranged}
	\end{gather}
	The quantity $\omega \beta \left (\norm{E_k}^2 + \norm{D_k}^2 + \left(1 + \frac{\overline g}{\omega}\right) D_k^\top E_k\right )$ is positive definite if $1 + \frac{\overline g}{\omega} <2$, i.e. $\omega > \alpha g$. Therefore, a sufficient condition for synchronization is
	\begin{equation} 
	\begin{aligned}
	\omega_{min} = g_a \alpha < \omega < \frac{\alpha}{1 - \alpha}g_a = \omega_{max}.
	\end{aligned}
	\label{eq:omega_bounds}
	\end{equation} 
\end{proof}	
\subsection{Analysis of the dynamics of one hypercolumn in the synchronized network}
Assuming that the network is synchronized, we can write the dynamics of the single module (hypercolumn) as
	\begin{equation}
	\begin{aligned}
	\dot{s}_{i1}&=-s_{i1}-a_{i1}+ \kappa o_{i1}-\frac{\kappa}{2},\\
	\tau \dot{a}_{i1}&=g_a o_{i1} - a_{i1},\\
	\dot{s}_{i2}&=-s_{i2}-a_{i2}+ \kappa o_{i2}-\frac{\kappa}{2},\\
	\tau \dot{a}_{i2}&=g_a o_{i2} - a_{i2},\\
	\end{aligned}
	\label{eq:complete_dynamics}
	\end{equation}
	where $\kappa=(N-1)\omega$.	
Recall that $o_{i1}+o_{i2}=1$. Define, $d_i = s_{i1}-s_{i2}$, and 
	\begin{equation}
	f(d_i)= o_{i1}-o_{i2}=\frac{\e^{d_i}-1}{\e^{d_i}+1}=\tanh(\frac{d_i}{2}).
	\label{soft_diff}
	\end{equation}
We now introduce the following change of variables that will help us to analyze the dynamics of each hypercolumn and earn a deeper insight into the behaviour of our system:
	\begin{equation}
	\begin{aligned}
	d_i &= s_{i1} - s_{i2}, \\
	e_i &= a_{i1} - a_{i2}. \\
	\end{aligned}
	\end{equation}
{\bf{Bifurcation Analysis}}\\[1mm]
From \eqref{eq:complete_dynamics}, the dynamics of $d_i,e_i$ obeys
\begin{equation} 
\begin{aligned}
\dot d_{i} &= - d_i - e_i + \kappa f(d_i), \\
\dot{e}_{i} &= \frac{g_a f(d_i) - e_{i}}{\tau }. \\
\end{aligned}
\label{eq:diff_dynamics_one}
\end{equation}
We start the analysis of the two dimensional system in \eqref{eq:diff_dynamics_one} by computing the equilibria $(d_i^*,e_i^*)$ and study their stability properties with the variations of the coupling parameter $\kappa$. The equilibria of the dynamics in \eqref{eq:diff_dynamics_one} satisfy the following equations 
	\begin{equation}
	(\kappa - g_a)f(d_i^*) - d_i^* = 0,
	\label{eq:eq_d_eq}
	\end{equation}
	\begin{equation}
	e_i^* = g_a f(d_i^*).
	\label{eq:eq_e_eq}
	\end{equation}
Recall from \eqref{soft_diff} that $f(d_i)= \tanh(\frac{d_i}{2})$. We can show that the system has a unique equilibrium if $\kappa < g_a+2$, while for $\kappa > g_a+2$ we have multiple equilibria. 
\begin{lemma}
The origin is the unique equilibrium for the system in \eqref{eq:diff_dynamics_one} if $\kappa < g_a+2$.
\end{lemma}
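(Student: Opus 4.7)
The plan is to reduce the equilibrium condition to a single scalar equation in $d_i^*$ and then show that this scalar equation admits only the trivial root under the hypothesis $\kappa < g_a + 2$. Since \eqref{eq:eq_e_eq} already expresses $e_i^*$ as a function of $d_i^*$, only \eqref{eq:eq_d_eq} needs to be analyzed. Substituting $f(d) = \tanh(d/2)$ from \eqref{soft_diff}, the question becomes whether the function
\[
h(d) := (\kappa - g_a)\tanh(d/2) - d
\]
has any zeros other than $d=0$.

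First I would observe that $h(0)=0$, so the origin is always an equilibrium (and then $e_i^* = g_a f(0) = 0$). Uniqueness will then follow from strict monotonicity of $h$. Differentiating,
\[
h'(d) = \frac{\kappa - g_a}{2}\operatorname{sech}^2(d/2) - 1.
\]
Because $\operatorname{sech}^2(d/2) \in (0,1]$ for every $d \in \R$, the maximum of $h'$ over $\R$ is attained at $d=0$, giving
\[
\max_{d \in \R} h'(d) = \frac{\kappa - g_a - 2}{2}.
\]

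The bound above works uniformly regardless of the sign of $\kappa - g_a$: when $\kappa \le g_a$, the first term in $h'$ is nonpositive and $h'(d) \le -1 < 0$; when $g_a < \kappa < g_a+2$, the coefficient $(\kappa - g_a)/2$ is positive but strictly less than $1$, and since $\operatorname{sech}^2 \le 1$ the bound is saturated exactly at $d=0$, still yielding $h'(d) < 0$. In either case the hypothesis $\kappa < g_a + 2$ forces $h'(d) < 0$ for every $d$, so $h$ is strictly decreasing on $\R$ and $d=0$ is its unique zero. Combined with \eqref{eq:eq_e_eq}, this gives $(d_i^*, e_i^*) = (0,0)$ as the unique equilibrium.

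There is no real obstacle here; the only step requiring a moment of care is ensuring the bound on $h'$ is uniform in $d$, which is immediate from $\operatorname{sech}^2 \le 1$. The threshold $\kappa = g_a + 2$ is exactly where $h'(0) = 0$, foreshadowing the pitchfork bifurcation into multiple equilibria announced in the paragraph preceding the lemma.
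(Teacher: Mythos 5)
Your proof is correct and rests on the same fact the paper uses: the slope of $(\kappa-g_a)\tanh(d/2)$ never exceeds $(\kappa-g_a)/2<1$, so the line $d$ can only be matched at the origin. The paper phrases this as the global bound $|\tanh(d/2)|\le|d|/2$ rather than through $h'$, and is terser about the case $\kappa\le g_a$ (where the sign of $(\kappa-g_a)f(d)$ already rules out nonzero roots), but the two arguments are essentially identical.
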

\begin{proof}
For $f(d_i)= \tanh(\frac{d_i}{2})$, we have $|f(d_i)| \leq |\frac{d_i}{2}|$. Hence if $\kappa-g_a<2$, the equality in \eqref{eq:eq_d_eq} is satisfied only for $d_i=0$.
\end{proof}
{\color{white}.}\\[2mm]
Now, assume that $\kappa <g_a +2$. The unique equilibrium is $(d_i^*,e_i^*)=(0,0)$. The computed Jacobian matrix $J_{(0,0)}$ and the characteristic polynomial are the following:	
	\begin{equation}
	J_{(0, 0)} = \begin{bmatrix} 
	-1 + \frac{\kappa}{2} & -1 \\
	\frac{g_a}{2\tau} & -\frac{1}{\tau}
	\end{bmatrix},
	\label{eq:Jac}
	\end{equation}
    \begin{equation}\label{chr}
	\rho(\lambda) = \lambda^2  -\sigma(\kappa) \lambda + \delta(\kappa),
	\end{equation}		
where $$\sigma(\kappa)=\frac{-2\tau - 2 + \tau \kappa}{2\tau},$$ $$\delta(\kappa)=\frac{-2\kappa + 2 g_a + 4}{4\tau}.$$	
 
Thus, the origin is asymptotically stable if the following two conditions hold
	\begin{align}
	\sigma(\kappa) &< 0 \quad \text{if}\quad \kappa < 2\left(1 + \frac{1}{\tau}\right),\\
	\delta(\kappa) &> 0 \quad \text{if}\quad  \kappa < g_a + 2.
	\label{hurwutz}
	\end{align}
As a result, the origin is the asymptotically stable equilibrium point (attractor) for \eqref{eq:diff_dynamics_one} if $\kappa<2(1+\frac{1}{\tau})$ holds. This condition is not particularly interesting since it means that all the units in the network converge to the same value and their output is identical. In other words, the network is not able to recall any pattern (see Fig. \ref{f2}-A). By increasing $\kappa$, the attractor of the dynamical system \eqref{eq:diff_dynamics_one} changes. 
In fact, at the critical value $\kappa^* = 2 \left( 1 + \frac{1}{\tau} \right)$ a {\em supercritical Hopf} bifurcation \cite{Guckenheimer1983} occurs. That is, a unique stable limit cycle bifurcates from the origin. This result is now formally presented below.
\begin{proposition}\label{pr}
For the system in \eqref{eq:diff_dynamics_one}, a unique stable limit cycle bifurcates from the fixed point $(d_i,e_i)=(0,0)$ into the region $\kappa > 2(1+\frac{1}{\tau})$ if $g_a >\frac{2}{\tau}$ holds.
\end{proposition}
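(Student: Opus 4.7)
\noindent
The plan is to verify the three hypotheses of the supercritical Hopf bifurcation theorem (for example, Theorem 3.4.2 of \cite{Guckenheimer1983}) at the critical value $\kappa^{*} = 2(1 + 1/\tau)$: (i) the linearization at $(d_i, e_i) = (0,0)$ has a pair of purely imaginary eigenvalues at $\kappa = \kappa^{*}$, (ii) these eigenvalues cross the imaginary axis transversally as $\kappa$ varies, and (iii) the first Lyapunov coefficient $\ell_1$ is strictly negative. From these three conditions, existence and uniqueness of a stable limit cycle bifurcating into the region $\kappa > \kappa^{*}$ follows immediately.

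For (i), I would reuse the characteristic polynomial \eqref{chr} already derived in the excerpt: by construction $\sigma(\kappa^{*}) = 0$, while a direct computation yields $\delta(\kappa^{*}) = (g_a\tau - 2)/(2\tau^{2})$, which is strictly positive precisely under the hypothesis $g_a > 2/\tau$. Hence the eigenvalues of $J_{(0,0)}$ at $\kappa^{*}$ are $\pm i\omega_0$ with $\omega_0 = \sqrt{\delta(\kappa^{*})} > 0$. For (ii), since $\sigma'(\kappa^{*}) = 1/2$, one has $\frac{d}{d\kappa}\operatorname{Re}\lambda\big|_{\kappa^{*}} = 1/4 > 0$, so the eigenvalue pair crosses the imaginary axis transversally from left to right; this transversality also pins any bifurcating cycle to the regime $\kappa > \kappa^{*}$.

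The main obstacle is (iii), the sign of $\ell_1$. The key observation is that $f(d) = \tanh(d/2)$ is odd, so $f''(0) = 0$ and all quadratic nonlinearities in \eqref{eq:diff_dynamics_one} vanish identically; the only relevant cubic contribution is $f'''(0)\,d_i^{3}/6 = -d_i^{3}/24$, which enters $\dot d_i$ with coefficient $-\kappa^{*}/24$ and $\dot e_i$ with coefficient $-g_a/(24\tau)$. I would then introduce a real linear change of variables $P$ built from the real and imaginary parts of a complex eigenvector of $J_{(0,0)}|_{\kappa^{*}}$, bringing the linearization to the canonical rotation block with frequency $\omega_0$. In the new coordinates the standard planar Lyapunov-coefficient formula collapses: with all quadratic partial derivatives equal to zero, only the cubic coefficient in the first transformed equation survives, and the sign of $\ell_1$ coincides with the sign of $-\kappa^{*}/24$. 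Since $\kappa^{*} > 0$, we conclude $\ell_1 < 0$. Thus the Hopf bifurcation is supercritical, and a unique stable limit cycle bifurcates from the origin into the region $\kappa > \kappa^{*}$, as claimed.
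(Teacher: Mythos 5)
Your proposal is correct and follows essentially the same route as the paper: both verify the three hypotheses of the Hopf bifurcation theorem at $\kappa^{*}=2(1+\tfrac{1}{\tau})$ --- purely imaginary eigenvalues via $\sigma(\kappa^{*})=0$ and $\delta(\kappa^{*})=\frac{g_a\tau-2}{2\tau^{2}}>0 \Leftrightarrow g_a>\tfrac{2}{\tau}$, transversality from $\sigma'(\kappa^{*})=\tfrac12$, and a negative cubic coefficient obtained by exploiting the oddness of $\tanh(d/2)$ (the paper's explicit computation via formula (3.4.11) of \cite{Guckenheimer1983} yields $-\kappa^{*}/64$). One small caution: which third-order partials survive in the Lyapunov-coefficient formula depends on the linear change of variables --- if $d_i$ becomes a mixture of both new coordinates, the terms $f_{xxx}+f_{xyy}+g_{xxy}+g_{yyy}$ all contribute and the sign is not automatic from $f'''(0)<0$ alone --- so the assertion that the sign of $\ell_1$ equals that of $-\kappa^{*}/24$ should be backed by carrying out the computation for a concrete transformation, as the paper does with its matrix $E$ for which $d_i=v_i$ and only $g_{yyy}$ survives.
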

\begin{proof}
To prove the above statement, we show that all of the conditions of \cite[Theorem~3.4.2]{Guckenheimer1983} are satisfied. Denote the eigenvalues of $J_{(0,0)}$ in \eqref{eq:Jac} as a function of the parameter $\kappa$ by $\lambda_{1,2}(\kappa)$. From \eqref{chr}, We have
	\begin{equation*}
	\begin{aligned}
	\lambda_{1,2}(\kappa) = \frac{1}{2}\left(\sigma(\kappa) \pm \sqrt{\sigma(\kappa)^2-4\delta(\kappa)} \right)
	\end{aligned}.
	\end{equation*}
At the critical value $\kappa^*= 2 \left(1 + \frac{1}{\tau} \right)$, the following conditions should be satisfied:
\begin{enumerate}
\item 
The Jacobian matrix, $J_{(0,0)}$ in \eqref{eq:Jac}, has a conjugate pair of imaginary eigenvalues. That is,
		\begin{align}
		\sigma(\kappa^*) =0,\quad
		\delta(\kappa^*) >0. \label{eq:non_hyp_cond}
		\end{align}
The above condition is satisfied if $g_a>\frac{2}{\tau}$ holds.
\item 
Eigenvalues of $J_{(0,0)}$ in \eqref{eq:Jac} vary smoothly by $\kappa$, \ie
\begin{align}
\left. \frac{\partial \sigma(\kappa)}{\partial \kappa}\right|_{\kappa=\kappa^*}= \frac{1}{2} \neq 0.
\end{align}
\item 
The last condition is to prove bifurcation of a stable limit cycle. We first rewrite the system in \eqref{eq:diff_dynamics_one} in the follwoing form
	\begin{equation}
	 \begin{bmatrix} 
	{\dot d}_i \\
	{\dot e}_i 
	\end{bmatrix} = \underbrace{\begin{bmatrix} 
	\frac{1}{\tau} & -1 \\
	\frac{g_a}{2\tau} & -\frac{1}{\tau}
	\end{bmatrix}}_{A}\begin{bmatrix} 
	{\dot d}_i \\
	{\dot e}_i 
	\end{bmatrix} + \begin{bmatrix} F_1(d_i) \\F_2(d_i) \end{bmatrix},
	\label{eq:r2}
	\end{equation}
with $F_1(d_i)=(-1-\frac{1}{\tau}) d_i + \kappa \tanh(\frac{d_i}{2})$ and $F_2(d_i)= -\frac{g_a}{2 \tau} d_i + \frac{g_a}{\tau}\tanh(\frac{d_i}{2})$. Denote the eigenvalues of the matrix $A$ in \eqref{eq:r2} by $\pm \beta i$, $\beta>0$. Notice that the latter are also the eigenvalues of $J_{(0,0)}$ at ${\kappa=\kappa^*}$. We now consider a coordinate transformation
$\begin{bmatrix} 
{d}_i \\{e}_i \end{bmatrix} = E \begin{bmatrix} {u}_i \\{v}_i \end{bmatrix}$ such that $$E=\begin{bmatrix} 0 & 1 \\-\beta & \frac{1}{\tau}\end{bmatrix}, \begin{bmatrix} 
{u}_i \\{v}_i \end{bmatrix} = E^{-1}\begin{bmatrix} {d}_i \\{e}_i \end{bmatrix}= \begin{bmatrix} \frac{d_i}{\tau \beta}-\frac{e_i}{\beta} \\{d}_i 
\end{bmatrix}.$$
Notice that in the above $d_i=v_i$, hence $F_1(d_i)=F_1(v_i),F_2(d_i)=F_2(v_i)$. We write the dynamics of \eqref{eq:diff_dynamics_one} in $(u_i,v_i)$ coordinate which gives
	\begin{equation}
	 \begin{bmatrix} 
	{\dot u}_i \\
	{\dot v}_i 
	\end{bmatrix} = \begin{bmatrix} 
	0 & -\beta \\
	\beta & 0
	\end{bmatrix}\begin{bmatrix} 
	{u}_i \\
	{v}_i 
	\end{bmatrix} + E^{-1} \begin{bmatrix} F_1(v_i) \\F_2(v_i) \end{bmatrix}.
	\label{eq:r1}
	\end{equation}
We now can calculate the cubic coefficient of the Taylor series of degree 3 for the above system based on the formula (3.4.11) of \cite{Guckenheimer1983}. The computation gives a negative value, $-\frac{\kappa^*}{64}$, which ends the proof.
\end{enumerate}
\end{proof}

Proposition 1 gives a sufficient condition for synchronization of the network in \eqref{s_dynaics_recall-n}-\eqref{a_dynaics_recall-n} and Proposition 2 presents a necessary condition such that the synchronous state is a stable limit cycle, \ie\  a stored pattern is recalled. We now combine the two results below.
\begin{corollary}\label{cor1}
For the network dynamical system in \eqref{s_dynaics_recall-n}- \eqref{a_dynaics_recall-n}, $\frac{g_a}{\tau} < \omega < \frac{g_a}{\tau-1}$ is a sufficient condition for the network to converge to the set ${\S}=\{s_i \in \R, a_i \in \R: s_i = s_k; a_i=a_k\  \forall i, k \in \{1,\ldots,N\}\}$. Moreover if $\omega < \frac{g_a + 2}{N-1}$ holds, then the origin is the unique equilibrium for the relative state, $(d_i,e_i)$ with $d_i=s_{i,1}-s_{i,2}, e_i=a_{i1} - a_{i2}, \forall i$. Furthermore, the two conditions $g_a >\frac{2}{\tau}$, and $\omega > \frac{2}{N-1}(1+\frac{1}{\tau})$ are necessary for achieving a stable limit cycle as the attractor of the relative state $(d_i,e_i)$ of each $s_i$ on the set $\S$.
\end{corollary}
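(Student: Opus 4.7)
The plan is to observe that the Corollary is a bookkeeping exercise that stitches together Proposition~\ref{pr1}, Lemma~1 and Proposition~\ref{pr} using the single identity $\kappa=(N-1)\omega$ that links the coupling gain of the full network to the effective coupling seen by the reduced two-dimensional hypercolumn dynamics in \eqref{eq:diff_dynamics_one}. So I would structure the proof as three short, self-contained items, each of which transports a condition already proved in an earlier statement into a condition on the network parameter $\omega$.

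First I would restate that the interval $\frac{g_a}{\tau}<\omega<\frac{g_a}{\tau-1}$ is exactly the sufficient synchronization interval $\omega_{\min}<\omega<\omega_{\max}$ established in \eqref{eq:omega_bounds}, so the first claim follows verbatim from Proposition~\ref{pr1}; there is nothing new to do beyond writing $\alpha=1/\tau$ and $\alpha/(1-\alpha)=1/(\tau-1)$. Second, once the network has collapsed onto $\S$, the relative-state dynamics \eqref{eq:diff_dynamics_one} is exactly the system that Lemma~1 analyses, and Lemma~1 says the origin is the unique equilibrium precisely when $\kappa<g_a+2$. Substituting $\kappa=(N-1)\omega$ converts this into $\omega<(g_a+2)/(N-1)$, which yields the second claim.

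Third, for the stable limit cycle statement I would invoke Proposition~\ref{pr} applied again to the reduced system \eqref{eq:diff_dynamics_one}: the Hopf bifurcation producing a stable limit cycle occurs at $\kappa^{\ast}=2(1+1/\tau)$ provided $g_a>2/\tau$ (which guarantees $\delta(\kappa^{\ast})>0$ and hence a pair of purely imaginary eigenvalues rather than a real zero eigenvalue). For the limit cycle that bifurcates to actually be present, the system must be on the supercritical side $\kappa>\kappa^{\ast}$, equivalently $\omega>\frac{2}{N-1}(1+1/\tau)$. Both conditions are necessary in the bifurcation-theoretic sense: if $g_a\le 2/\tau$ then $\delta(\kappa^{\ast})\le 0$ and one does not have a pair of imaginary eigenvalues to start from, while if $\kappa\le\kappa^{\ast}$ the trace condition \eqref{hurwutz} shows the origin is still asymptotically stable, so no limit cycle can coexist with it in a neighbourhood of the origin.

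There is no real obstacle here since every analytic step has already been done in the three preceding results; the only thing to be careful about is the direction of the implications, making sure that the wording ``sufficient'' for the first two items and ``necessary'' for the last are consistent with what Proposition~\ref{pr1}, Lemma~1 and Proposition~\ref{pr} actually prove. The mild subtlety is that Proposition~\ref{pr} provides sufficient conditions for the Hopf bifurcation, but the same conditions are necessary for the origin to lose stability through a pair of purely imaginary eigenvalues at $\kappa^{\ast}$, which is the sense in which the corollary's ``necessary'' should be read; I would spell this out in one sentence to avoid confusion.
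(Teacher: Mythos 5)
Your proposal is correct and matches the paper's approach: the paper gives no separate proof of Corollary~\ref{cor1} beyond the remark that it combines Proposition~\ref{pr1}, Lemma~1 and Proposition~\ref{pr} through the substitution $\kappa=(N-1)\omega$, which is exactly the bookkeeping you carry out. Your extra sentence clarifying the sense in which the conditions $g_a>2/\tau$ and $\omega>\frac{2}{N-1}(1+\frac{1}{\tau})$ are ``necessary'' is a welcome addition that the paper leaves implicit.
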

By increasing $\kappa$, the attractor of the dynamics in \eqref{eq:diff_dynamics_one} changes. In what follows, we present a numerical example explaining these changes. The numerical example is also presented in Figure \ref{f2} (plotted using MATCONT \cite{dhooge2003matcont} and Python simulations). 

\begin{example}[Numerical bifurcation analysis]
For the system (representing one hypercolumn) in \eqref{eq:complete_dynamics}, set $\tau=2,\ g_a=10$. The relative states $(d_i,e_i)$ with the dynamics in \eqref{eq:diff_dynamics_one} converge to
\begin{itemize}
\item a stable equilibrium point for $\kappa<2(1+\frac{1}{\tau})=3$. 
In this case, no pattern is recalled, \ie\  both minicolumns reach the same level of activation, Fig. (\ref{f2}-A),
\item a stable limit cycle with $2(1+\frac{1}{\tau}) < \kappa <\kappa^{oo}$, $g_a + 2=12 < \kappa^{oo} \approx 13.11$. In this case the two minicolumns are activated in turn, that is the two stored patterns are reactivated in turn, Fig. (\ref{f2}-B),
\item a limit cycle and two locally stable equilibria points with $13.11 \lessapprox \kappa \lessapprox 13.24605$. In this case, both oscillatory behaviour (cyclic activation) and fixed point behaviours are present. Depending on the initial conditions, the system converges to one of the three attractors, Fig. (\ref{f2}-C, \ref{f2}-D), 
\item two locally stable equilibria with $\kappa \gtrapprox 13.24605$. In this case, the system does not show any oscillatory behaviour but it locally converges to one of the two stored patterns, \ie\  one of the two minicolumns locally reaches and stays at a higher level of activation, Fig. (\ref{f2}-E).
\end{itemize}
\end{example}

\pagebreak
\begingroup
\let\clearpage\relax 
\onecolumn 
\begin{figure}[!h]
\centering
\includegraphics[width=\textwidth]{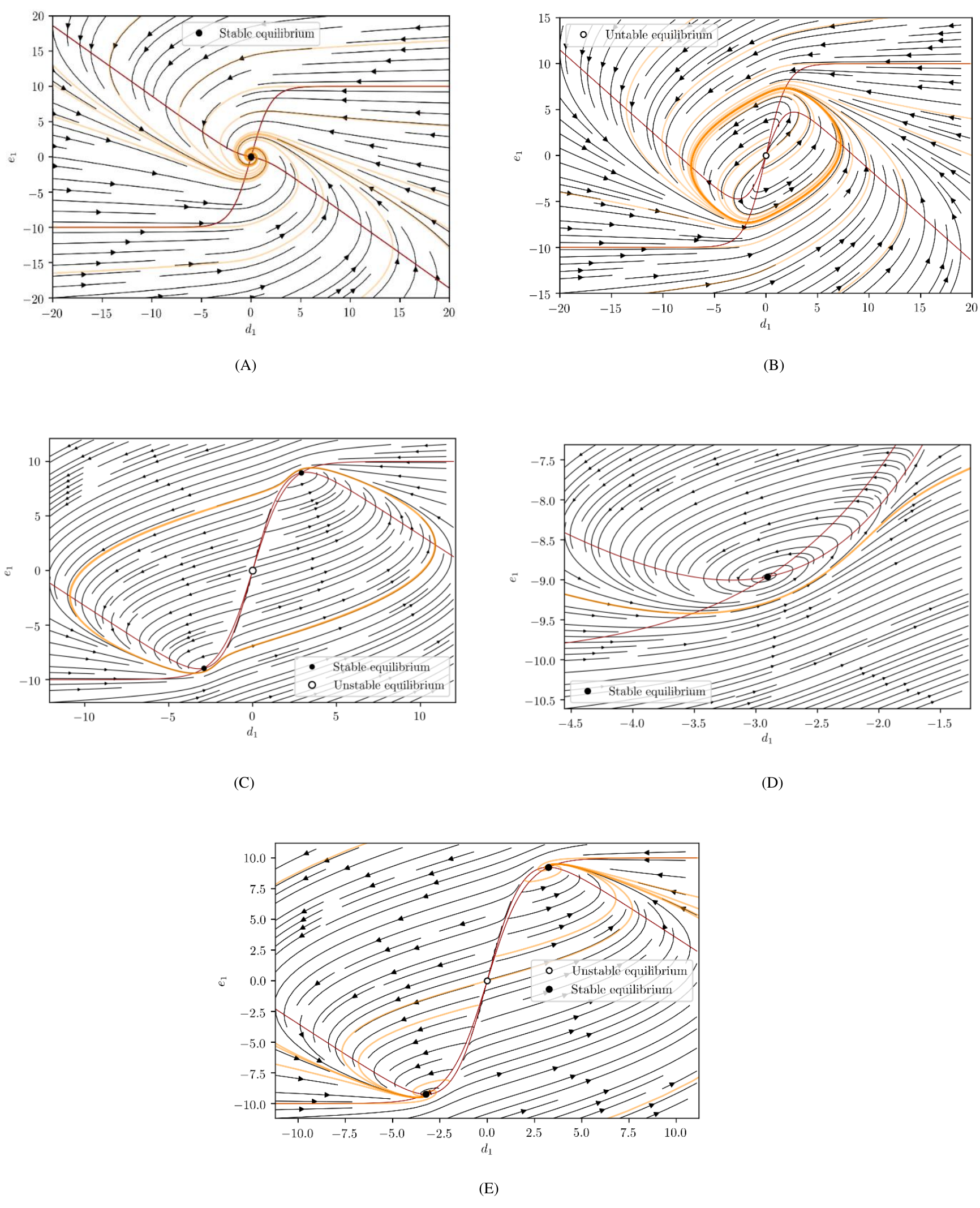}
\caption{\label{f2}Phase portrait of system in \eqref{eq:diff_dynamics_one} for $g_a=10,\ \tau=2$. The nullclines are shown in red; the streamlines are shown in black. Several trajectories of the system with different initial conditions are shown in orange. The plots show changes of the attractor for the dynamics in \eqref{eq:diff_dynamics_one} based on the increase of $\kappa$. In plots C, D where both fixed points and limit cycles (LC) exist, the LCs are plotted in Orange. Plot (D) shows a magnification of Plot (C).}
\end{figure}
\endgroup
\begingroup
\twocolumn
\section{Simulation results}\label{sec:sim}
We consider a network of $N=12$ hypercolumns, each composed of two minicolumns. We set, $g_a=97$, $\tau=54$ (implemented as $\frac{\tau_a}{\tau_m}=\frac{27}{0.05}$, see \cite{lansner2013reactivation}). Based on Corollary 1, considering both the sufficient condition for achieving synchronization and the necessary condition such that each hypercolumn attractor is a limit cycle, we calculate $\omega=1.8$, hence $\kappa=19.8$. Figure \ref{f3} shows the relative state of positively correlated minicolumns. For clarity of presentation, the relative states with respect to hypercolumn 1, \ie\  variables $D_{1,\ell}=s_{1,1}-s_{\ell,1}$ and $E_{1,\ell}=a_{1,1}-a_{\ell,1}$ with $\ell \in \{2,..,N\}$, are plotted. As shown the relative states of positively correlated minicolumns in all hypercolumns converge to zero. The latter implies synchronization. Figures \ref{f4} and \ref{f5} show the dynamics of one hypercolumn. As shown in Figure \ref{f4}, the two minicolumns oscillate in turn. Figure \ref{f5} shows the output of the two minicolumns and the level of their corresponding adaptation variables.  
 As shown, the role of adaptation variable is to inhibit the minicolumn which is most active and reactivate it in turn.  
\begin{figure}[h]
\centering
\includegraphics[width=0.4\textwidth]{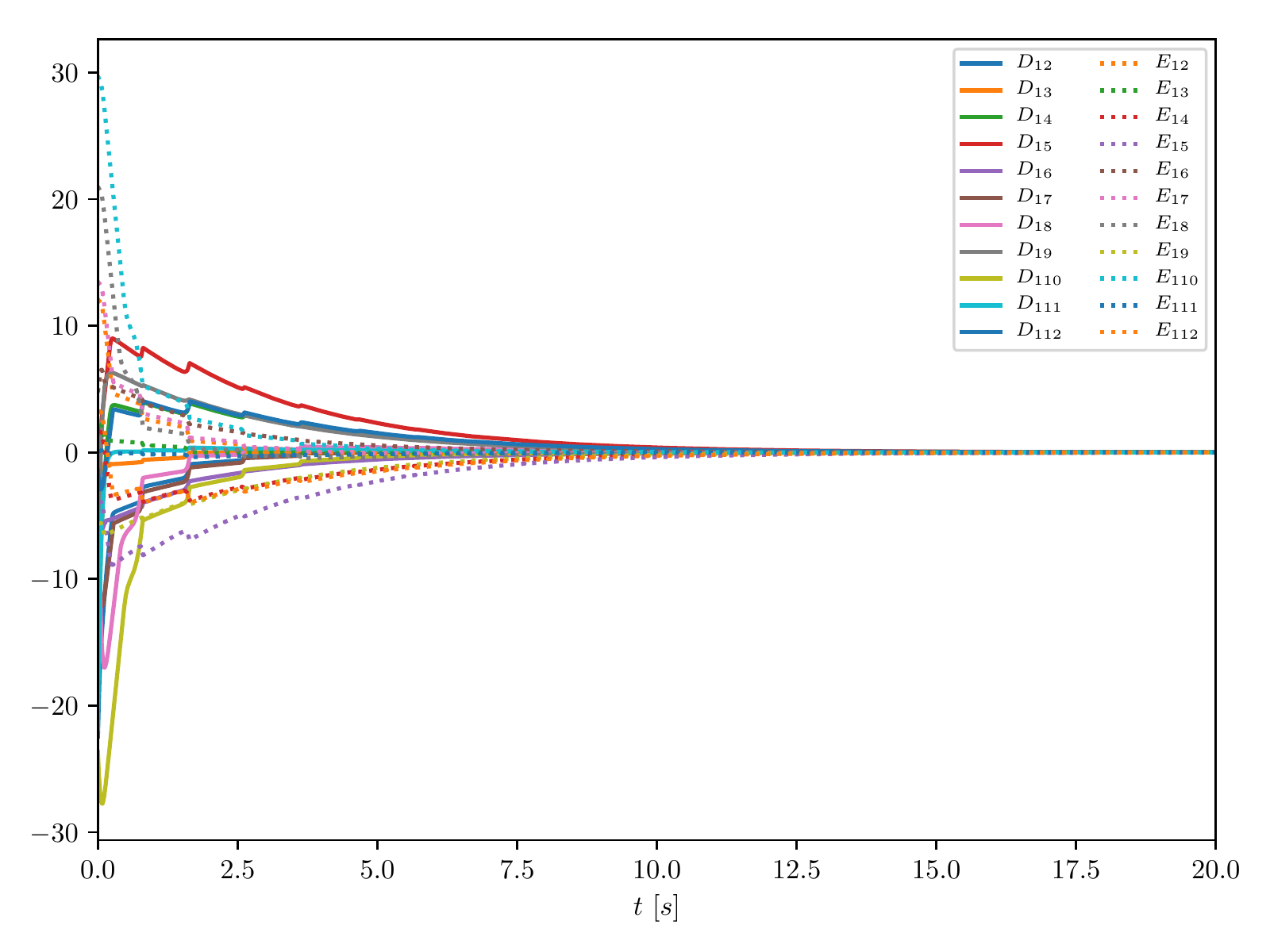}
\caption{\label{f3} The plot of the relative states of positively correlated minicolumns with respect to hypercolumn 1.}
\end{figure}
\begin{figure}[h]
\centering
\includegraphics[width=0.38\textwidth]{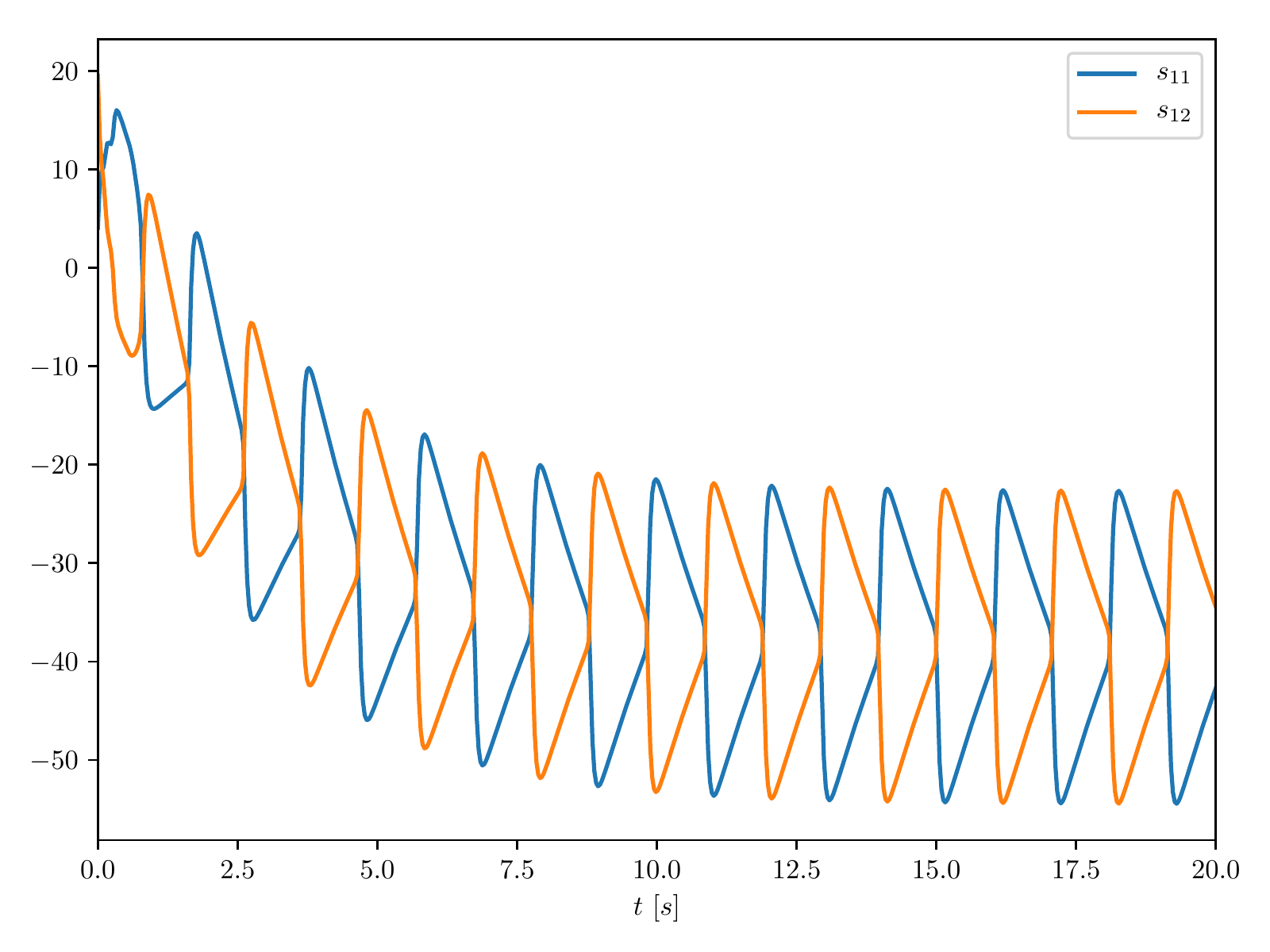}
\caption{\label{f4} The state of hypercolumn 1. Variables $s_{11}$ and $s_{12}$ achieves alternative activation and deactivation.}
\end{figure}
\begin{figure}[h]
\centering
\includegraphics[width=0.38\textwidth]{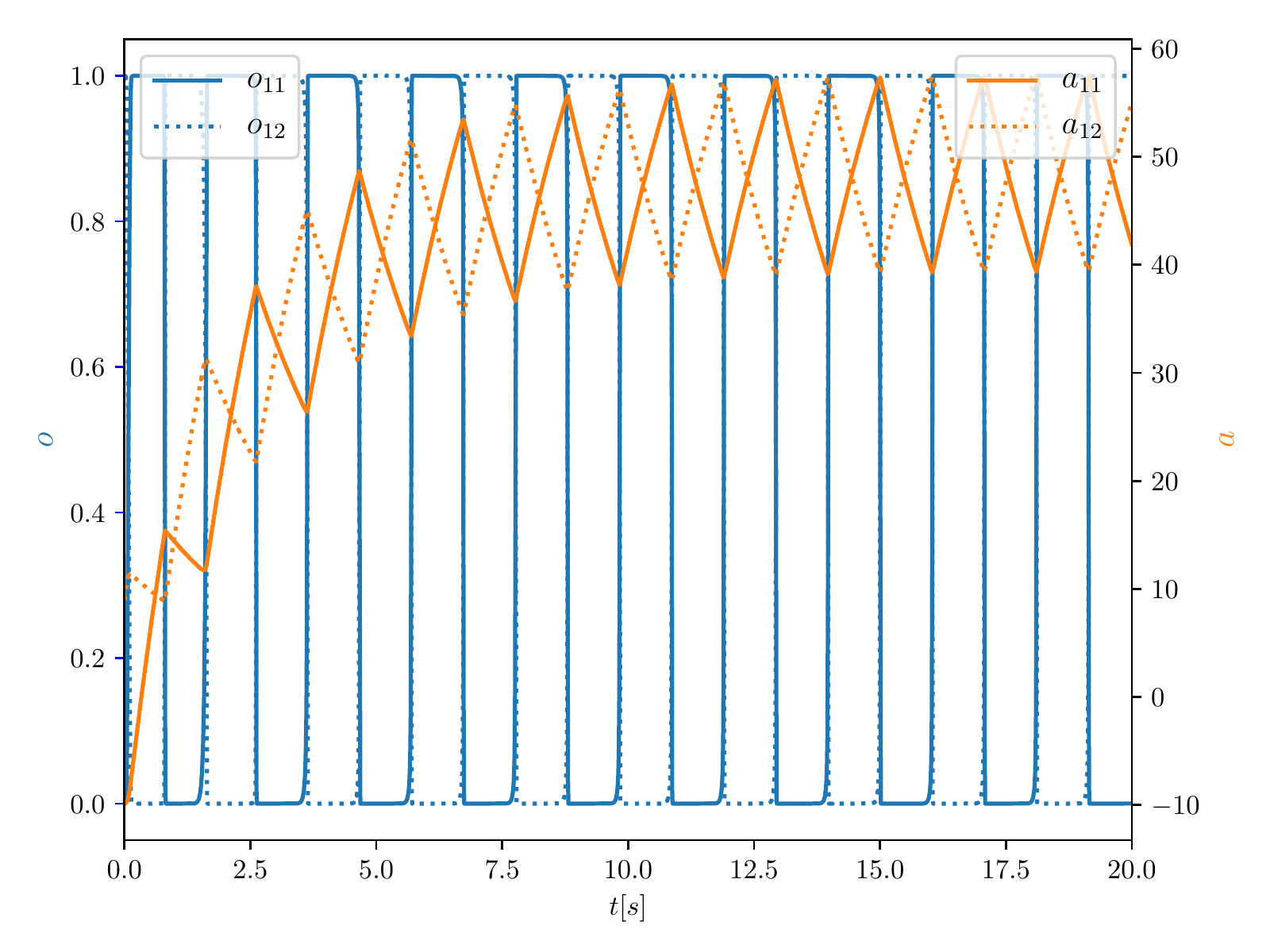}
\caption{\label{f5} Output and adaptation variables of a single hypercolumn. The adaptation variables are plotted in orange, the output of the units in blue. The inhibition of one unit allows the other unit to activate.}
\end{figure}
\section{Conclusions}\label{sec:conclude} 
This paper has studied the free recall dynamics of a multi-item working memory network modeled as a simplified modular attractor neural network. The network consisted of $N$ modules each one composed of two units. We analyzed the free recall dynamics assuming a constant, and homogeneous coupling between the network modules. A sufficient condition for synchronization of the network was obtained. Furthermore, assuming a synchronized network, the behavior of one module was analyzed using bifurcation analysis tools. Based on this analysis, a necessary coupling condition was given for achieving a stable limit cycle behavior for each module in the synchronized network. The latter implies the memory is recalling a stored pattern.
\bibliographystyle{plain}
\bibliography{biblio}

\begin{thebibliography}{10}

\bibitem{d2015cognitive}
M.~D'Esposito and B.R. Postle.
\newblock The cognitive neuroscience of working memory.
\newblock {\em Annual review of psychology}, 66:115--142, 2015.

\bibitem{dhooge2003matcont}
A.~Dhooge, W.~Govaerts, and Y.A. Kuznetsov.
\newblock Matcont: a matlab package for numerical bifurcation analysis of odes.
\newblock {\em ACM Transactions on Mathematical Software (TOMS)},
  29(2):141--164, 2003.

\bibitem{fiebig2018active}
F.~Fiebig.
\newblock {\em Active Memory Processing: on Multiple Time-scalesin Simulated
  Cortical Networkswith Hebbian Plasticity}.
\newblock PhD thesis, KTH Royal Institute of Technology, 2018.

\bibitem{fiebig2017spiking}
F.~Fiebig and A.~Lansner.
\newblock A spiking working memory model based on hebbian short-term
  potentiation.
\newblock {\em Journal of Neuroscience}, 37(1):83--96, 2017.

\bibitem{graves2016hybrid}
A.~Graves et~al.
\newblock Hybrid computing using a neural network with dynamic external memory.
\newblock {\em Nature}, 538(7626):471--476, 2016.

\bibitem{gray2017agent}
R.~Gray, A.~Franci, V.~Srivastava, and N.E. Leonard.
\newblock An agent-based framework for bio-inspired, value-sensitive
  decision-making.
\newblock {\em IFAC-PapersOnLine}, 50(1):8238--8243, 2017.

\bibitem{Guckenheimer1983}
J.~Guckenheimer and P.~Holmes.
\newblock {\em Local Bifurcations}, pages 117--165.
\newblock Springer New York, New York, NY, 1983.

\bibitem{hopfield1982neural}
J.J. Hopfield.
\newblock Neural networks and physical systems with emergent collective
  computational abilities.
\newblock {\em Proceedings of the national academy of sciences},
  79(8):2554--2558, 1982.

\bibitem{jafarian2018syn}
M.~Jafarian, X.~Yi, M.~Pirani, H.~Sandberg, and K.H. Johansson.
\newblock Synchronization of kuramoto oscillators in a bidirectional
  frequency-dependent tree network.
\newblock In {\em 2018 IEEE Conference on Decision and Control (CDC)}, pages
  4505--4510.

\bibitem{kanter1988potts}
I.~Kanter.
\newblock Potts-glass models of neural networks.
\newblock {\em Physical Review A}, 37(7):2739, 1988.

\bibitem{lansner2009associative}
A.~Lansner.
\newblock Associative memory models: from the cell-assembly theory to
  biophysically detailed cortex simulations.
\newblock {\em Trends in neurosciences}, 32(3):178--186, 2009.

\bibitem{lansner2013reactivation}
A.~Lansner, P.~Marklund, S.~Sikstr{\"o}m, and L.G. Nilsson.
\newblock Reactivation in working memory: an attractor network model of free
  recall.
\newblock {\em PloS one}, 8(8):e73776, 2013.

\bibitem{sepulchre2006oscillators}
R.~Sepulchre.
\newblock Oscillators as systems and synchrony as a design principle.
\newblock In {\em Current trends in nonlinear systems and control}, pages
  123--141. Springer, 2006.

\end{thebibliography}
\endgroup
\end{document}